\newcommand{\vertiii}[1]{{\left\vert\kern-0.25ex\left\vert\kern-0.25ex\left\vert #1
		\right\vert\kern-0.25ex\right\vert\kern-0.25ex\right\vert}}
\newtheorem{lemma}{Lemma}
\newtheorem{theorem}{Theorem}
\newcommand{\VV}{\mathrm{var}}
\newcommand{\EE}{\mathbb{E}}
\newcommand{\PP}{\mathbb{P}}
\newcommand{\PA}{\mathrm{pa}}
\newcommand{\CH}{\mathrm{ch}}
\newcommand{\AN}{\mathrm{an}}
\newcommand{\DE}{\mathrm{de}}
\newcommand{\G}{\mathcal{G}}
\newcommand{\Sigmah}{\widehat{\Sigma}}
\title{On Causal Discovery with Equal Variance Assumption}
\author{Wenyu Chen}
\address{Department of Statistics, University
	of Washington}
\email{wenyuc@uw.edu}
\author{Mathias Drton}
\address{Department of Mathematical Sciences, University
	of Copenhagen}
\email{md5@uw.edu}
\author{Y.~Samuel Wang}
\address{Booth School of Business, The University of Chicago}
\email{swang24@uchicago.edu}
\begin{document}

	\begin{abstract}
		Prior work has shown that causal structure can be uniquely
		identified from observational data when these follow a structural
		equation model whose error terms have equal variances.  We show that
		this fact is implied by an ordering among (conditional) variances.
		We demonstrate that ordering estimates of these variances yields a
		simple yet state-of-the-art method for causal structure learning
		that is readily extendable to high-dimensional problems.
	\end{abstract}

\maketitle	
	
	\section{Introduction}
	A structural equation model for a random vector
	$X = (X_1, \ldots, X_p)$ postulates causal relations in which each
	variable $X_j$ is a function of a subset of the other
	variables and a stochastic error $\varepsilon_j$.  Causal
	discovery/structure learning is the problem of inferring which of
	other variables each variable $X_j$ depends on.  We consider this
	problem 
        where only observational
	data, that is, a sample from the joint distribution of $X$, is
	available.  While in general only an equivalence class of structures can
	then be inferred \citep{pearl:2009,spirtes2000causation}, recent work
	stresses that unique identification is possible under assumptions such
	as non-linearity with additive errors, linearity with non-Gaussian
	errors, and linearity with errors of equal variance; see the reviews
	of \citet{drton:2017} and \citet{heinze:2018} or the book of
	\citet{jonas:book}.
	
	This note is concerned with the equal variance case
	treated by \citet{Peters2014} and \citet{Loh2014} who prove
	identifiability of the causal structure and propose greedy search
	methods for its estimation.  Our key observation is that the
	identifiability is implied by an ordering among certain conditional
	variances.  Ordering estimates of these variances yields a fast
	method for estimation of the causal ordering of the variables.  The
	precise causal structure can then be inferred using variable selection
	techniques for regression \citep{shojaie:2010}.  Specifically,
	we develop a top-down approach that infers the
	ordering by successively identifying sources. The method is
	developed for low- as well as high-dimensional problems.
	Simulations	show significant gains in computational efficiency when compared with
	greedy search and increased accuracy when the number of variables
	$p$ is large.
	
	An earlier version of this note also included a bottom-up
        method which identified the causal ordering by successively
        finding sinks via minimal precisions.  However, after the
        note was finished, we became aware of~\citet{Ghoshal2018} who
        proposed a similar bottom-up approach.  We emphasize that our
        top-down approach only requires control of the maximum
        in-degree as opposed to the bottom-up approach which requires
        control of the maximum Markov blanket. This is discussed
        further in Section~\ref{sec:high-dimens-probl} and a direct
        numerical comparison is given in
        Section~\ref{sec:highDSimulations}.
	
	\section{Structural Equation Models and Directed Acyclic Graphs}
	
	Suppose, without loss of generality, that the observed random vector
	$X=(X_1,\dots,X_p)$ is centered.  In a linear structural equation
	model, $X$ then solves an equation system
	\begin{equation}\label{eq:SEM}
	X_j = \sum_{k\not= j}\beta_{jk}X_k + \varepsilon_j,\qquad j=1,\ldots, p,
	\end{equation}
	where the $\varepsilon_j$ are independent random variables with mean zero, and
	the coefficients $\beta_{jk}$ are unknown parameters.  Following
	\cite{Peters2014}, we assume that all $\varepsilon_j$ have a common
	unknown variance $\sigma^2>0$.  We will write $X\sim (B,\sigma^2)$ to
	express the assumption that there indeed exist independent errors
	$\varepsilon_1,\dots,\varepsilon_p$ of equal variance $\sigma^2$ such that $X$
	solves~(\ref{eq:SEM}) for coefficients given by a real $p\times p$
	matrix $B=(\beta_{jk})$ with zeros along the diagonal.
	
	The causal structure inherent to the equations in~(\ref{eq:SEM}) is
	encoded in a directed graph $\G(B)$ with vertex set $V=\{1,\dots,p\}$
	and edge set $E(B)$ equal to the support of $B$.  So,
	$E(B)=\{(k,j): \beta_{jk}\not=0\}$.  Inference of $\G(B)$ is the goal
	of causal discovery as considered in this paper.  As in related work,
	we assume $\G(B)$ to be a directed acyclic graph (DAG) so that $B$ is permutation similar to a triangular matrix.
	Then~(\ref{eq:SEM}) admits the unique solution $X=(I-B)^{-1}\varepsilon$ where
	$\varepsilon=(\varepsilon_1,\dots,\varepsilon_p)$.  Hence, the covariance matrix of
	$X\sim (B,\sigma^2)$ is
	\begin{equation}
	\label{eq:Sigma}
	\Sigma := \EE(XX^T) = \sigma^2(I- B)^{-1}(I-B)^{-T}.
	\end{equation}
	
	We will invoke the following graphical concepts.
	If the considered graph $\G$ contains
	the edge $k\to j$, then $k$ is a parent of its child
	$j$.  We write $\PA(j)$ for the set of all parents of a node $j$.
	Similarly, $\CH(j)$ is the set of children of $j$.  If there exists a
	directed path $k \rightarrow \ldots \rightarrow j$, then $k$ is an
	ancestor of its descendant $j$.  The sets of ancestors
	and descendants of $j$ are $\AN(j)$ and $\DE(j)$, respectively.  Here,
	$j\in\AN(j)$ and $j\in\DE(j)$.  A set of nodes $C$ is ancestral
	if $\AN(j)\subseteq C$ for all $j\in C$.
	If $\G$ is a DAG, then it admits a topological ordering of its
	vertices.  In other words, there exists a numbering $\sigma$ such that
	$\sigma(j) < \sigma(k)$ only if $k \notin \AN(j)$.  Finally, every DAG
	contains at least one source, that is, a node $j$ with
	$\PA(j)=\emptyset$.  Similarly, every DAG contains at least one sink,
	which is a node $j$ with $\CH(j)=\emptyset$.

	
	\section{Identifiability by Ordering Variances}
	\label{sec:ident-from-vari}

	The main result of \citet{Peters2014} shows that the graph
	$\G(B)$ and the parameters $B$ and $\sigma^2$ are identifiable
	from the covariance in~(\ref{eq:Sigma}). No faithfulness
	assumptions are needed.

	\begin{theorem}
		\label{thm:peters}
		Let $X\sim(B_X,\sigma_X^2)$ and $Y\sim(B_Y,\sigma_Y^2)$ with both
		$\G(B_X)$ and $\G(B_Y)$ directed and acyclic.  If
		$\VV(X)=\VV(Y)$, then $\G(B_X)=\G(B_Y)$, $B_X=B_Y$, and
		$\sigma_X^2=\sigma_Y^2$.
	\end{theorem}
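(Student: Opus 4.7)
My plan is to recover $\sigma^2$, a common topological ordering, and then the coefficients $B$ from the covariance $\Sigma = \VV(X) = \VV(Y)$ alone, by probing with conditional variances. Every conditional variance $\VV(X_j\mid X_S)$ equals the Schur complement $\Sigma_{jj} - \Sigma_{jS}\Sigma_{SS}^{-1}\Sigma_{Sj}$, so any procedure expressed in terms of such conditional variances produces the same output whether applied to $X$ or to $Y$.

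The technical core is the following lemma, which I would prove first: if $X\sim(B,\sigma^2)$ has DAG $\G(B)$, $S \subseteq V$ is ancestral in $\G(B)$, and $j\notin S$, then
\[
\VV(X_j\mid X_S) \;\geq\; \sigma^2,
\]
with equality if and only if $\PA(j)\subseteq S$. For the ``$\Leftarrow$'' direction, the structural equation gives $X_j - \sum_{k\in\PA(j)}\beta_{jk}X_k = \varepsilon_j$, and ancestrality of $S$ together with $j\notin S$ forces $S\cap\DE(j)=\emptyset$, making $\varepsilon_j$ independent of $X_S$. For the strict inequality when $\PA(j)\not\subseteq S$, I would expand
\[
\VV(X_j\mid X_S) = \sigma^2 + \sum_{k,l\in\PA(j)\setminus S} \beta_{jk}\beta_{jl}\,\mathrm{Cov}(X_k,X_l\mid X_S),
\]
and note that this quadratic form is strictly positive: the conditional covariance matrix $\mathrm{Cov}(X_{V\setminus S}\mid X_S)$ is positive definite as a Schur complement of $\Sigma \succ 0$, and every $\beta_{jk}$ with $k\in\PA(j)$ is nonzero by definition of an edge.

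With the lemma in hand, I would build a common ordering greedily. Set $C_0=\emptyset$, and at step $i$ let $j_{i+1}$ be any element of $V\setminus C_i$ minimizing $\VV(X_j\mid X_{C_i})$; set $C_{i+1}=C_i\cup\{j_{i+1}\}$. By induction on $i$: the subgraph of $\G(B_X)$ on $V\setminus C_i$ has a source whose parents in $\G(B_X)$ lie in $C_i$, so by the lemma it attains conditional variance $\sigma_X^2$; the analogous statement holds for $\G(B_Y)$ with $\sigma_Y^2$. Since the minimum depends only on $\Sigma$, the two values coincide, giving $\sigma_X^2=\sigma_Y^2=:\sigma^2$. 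Moreover, any minimizer $j_{i+1}$ satisfies $\PA(j_{i+1})\subseteq C_i$ in \emph{both} graphs, preserving ancestrality of $C_{i+1}$ simultaneously for $\G(B_X)$ and $\G(B_Y)$. The resulting permutation is thus a valid topological ordering of each DAG.

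For the final step, let $S_j$ be the set of predecessors of $j$ in this common ordering. Then in each model $\PA(j)\subseteq S_j$ and $S_j\cap\DE(j)=\emptyset$, so $\varepsilon_j$ is uncorrelated with $X_{S_j}$. The vectors $(\beta_{jk})_{k\in S_j}$ taken from $B_X$ and from $B_Y$ both solve the population normal equations $\Sigma_{S_jS_j}\gamma = \Sigma_{S_j j}$; uniqueness of the regression forces them to agree, and the entries of $B$ at $(j,k)$ with $k\notin S_j\cup\{j\}$ vanish in each model because such $k$ cannot be an ancestor of $j$. Hence $B_X=B_Y$ and $\G(B_X)=\G(B_Y)$. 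I expect the main obstacle to be the strict inequality in the lemma---ruling out cancellation among the ``leftover'' parents---which ultimately rests on positive definiteness of $\Sigma$.
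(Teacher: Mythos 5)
Your proposal is correct and follows essentially the same route as the paper: top-down identification of source nodes by minimizing (conditional) variances over ancestral sets, recovery of $\sigma^2$ as the attained minimum, and recovery of $B$ by population regression of each variable on its predecessors. Your key lemma is exactly the statement the paper obtains by combining its source-variance lemma with the residualization-over-ancestral-sets lemma (and restates as Lemma~\ref{lem:topDownHD}); the only notable difference is that you derive the strict inequality from positive definiteness of the conditional covariance restricted to $\PA(j)\setminus S$, whereas the paper exhibits a parent $\ell$ with $\pi_{j\ell}=\beta_{j\ell}$ via a path argument, which additionally yields the quantitative gap $\sigma^2(1+\zeta)$ needed later for the finite-sample results.
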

	
	In this section we first give an inductive proof of
	Theorem~\ref{thm:peters} that proceeds by recursively identifying
	source nodes for $\G(B)$ and subgraphs.  We then clarify that
	alternatively one could identify sink nodes.
	Our first lemma clarifies that the sources in $\G(B)$ are
	characterized by minimal variances.
	We define
	\begin{equation}\label{eq:betaMin}
	\zeta\equiv \zeta(B) \;=\; \min_{(k,j)\in E(B)} \beta_{jk}^2.
	\end{equation}

	\begin{lemma}\label{lem:cov_known}
		Let $X \sim (B, \sigma^2)$ with $\G(B)$ directed and acyclic.
		If $\PA(j)=\emptyset$, then $\VV(X_j)=\sigma^2$.  If
		$\PA(j) \neq \emptyset$, then
		$\VV(X_j)\ge \sigma^2(1+\zeta)>\sigma^2$.
	\end{lemma}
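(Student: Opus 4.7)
The first assertion is immediate: when $\PA(j)=\emptyset$, the structural equation~(\ref{eq:SEM}) reduces to $X_j=\varepsilon_j$, so $\VV(X_j)=\sigma^2$.

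For the second assertion, I would start from the decomposition
\[
X_j \;=\; \sum_{k\in\PA(j)} \beta_{jk} X_k \;+\; \varepsilon_j
\]
and note that $\varepsilon_j$ is independent of every $X_k$ with $k\in\PA(j)$, because the DAG property forces each such $X_k$ to be a function of $\{\varepsilon_\ell : \ell\in\AN(k)\}$, a set disjoint from $\{j\}$. Thus $\VV(X_j) = \sigma^2 + \VV\bigl(\sum_{k\in\PA(j)} \beta_{jk} X_k\bigr)$, and it suffices to show the variance of the parental combination is at least $\sigma^2\zeta$.

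The key technical step, and the only real obstacle, is to rule out cancellation among the error terms shared by different parents once we substitute $X_k=(I-B)^{-1}\varepsilon$ componentwise, i.e.\ $X_k = \sum_{\ell\in\AN(k)} \alpha_{k\ell}\varepsilon_\ell$ with $\alpha_{kk}=1$. My plan is to single out a topologically \emph{latest} parent $k^*\in\PA(j)$ with respect to any topological ordering of $\G(B)$. By the choice of $k^*$, no other parent $k\in\PA(j)\setminus\{k^*\}$ satisfies $k^*\in\AN(k)$, so $\varepsilon_{k^*}$ does not appear in the expansion of $X_k$ for any such $k$. Consequently, the coefficient of $\varepsilon_{k^*}$ in $\sum_{k\in\PA(j)} \beta_{jk} X_k$ is exactly $\beta_{jk^*}\cdot \alpha_{k^*k^*} = \beta_{jk^*}$, with no opportunity for cancellation.

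Since the $\varepsilon_\ell$ are independent with common variance $\sigma^2$, the variance of the parental combination dominates the isolated contribution $\beta_{jk^*}^2 \sigma^2$. By definition of $\zeta$ in~(\ref{eq:betaMin}) and the fact that $(k^*,j)\in E(B)$, we have $\beta_{jk^*}^2\ge\zeta$, yielding
\[
\VV(X_j) \;\ge\; \sigma^2 + \beta_{jk^*}^2\,\sigma^2 \;\ge\; \sigma^2(1+\zeta).
\]
Strict inequality over $\sigma^2$ follows because $\zeta>0$: the minimum in~(\ref{eq:betaMin}) is taken over actual edges of $E(B)$, along which $\beta_{jk}\neq 0$ by definition.
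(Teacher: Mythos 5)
Your proof is correct and follows essentially the same route as the paper's: the paper expands $\VV(X_j)=\sigma^2\sum_k\pi_{jk}^2$ with $\Pi=(I-B)^{-1}$ and picks a parent $\ell$ with $\DE(\ell)\cap\PA(j)=\{\ell\}$ so that $\pi_{j\ell}=\beta_{j\ell}$, which is precisely your ``topologically latest parent'' $k^*$ whose error coefficient cannot cancel. The only difference is presentational (direct expansion of $X_j$ in the $\varepsilon_\ell$ versus the matrix identity $\Sigma=\sigma^2\Pi\Pi^{\T}$), and both yield the same bound $\VV(X_j)\ge\sigma^2(1+\beta_{jk^*}^2)\ge\sigma^2(1+\zeta)$.
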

	\begin{proof}
		Let $\Pi=(\pi_{jk})=(I-B)^{-1}$.  Each total
			effect $\pi_{jk}$ is the sum over directed paths from $k$
		to $j$ of products of coefficients $\beta_{ab}$ along each path.  In
		particular, $\pi_{jj}=1$.
		From~(\ref{eq:Sigma}),
		$\VV(X_j)=\sigma^2 \sum_{k=1}^p\pi_{jk}^2$.  Hence, if
		$\PA(j) = \emptyset$, then $\VV(X_j) = \sigma^2$ because
		$\pi^2_{jk} = 0$ for all $k \neq j$.  If $\PA(j)\not=\emptyset$ then
		by acyclicity of $\G(B)$
		there exists a node $\ell \in \PA(j)$ such that
		$\DE(\ell) \cap \PA(j) = \{\ell\}$.
		Then $\pi_{j\ell}^2 = \beta_{j\ell}^2 \ge \zeta$ and
		\begin{equation*}\label{eq:cov_diag2}
		\VV(X_j) =
		\sigma^2\bigg( 1+ \sum_{k\not=j}\pi_{jk}^2 \bigg)
		\ge
		\sigma^2\left( 1+ \pi_{j\ell}^2 \right) \ge
		\sigma^2\left( 1+ \zeta \right).
		\end{equation*}
	\end{proof}

	The next lemma shows that by conditioning on a source, or more
	generally an ancestral set, one recovers a structural equation model with equal error
	variance whose graph has the source node or the entire ancestral set
	removed.  For a variable $X_j$ and a vector $X_C=(X_k:k\in C)$, we
	define $X_{j.C}=X_j-\EE(X_j|X_C)$.

	\begin{lemma}\label{lem:adjust}
		Let $X \sim(B, \sigma^2)$ with $\G(B)$ directed and acyclic. Let $C$ be
		an ancestral set in $\G(B)$.  Then
		$(X_{j. C}:j\notin C) \sim (B[-C], \sigma^2)$ for submatrix
		$B[-C]=(\beta_{jk})_{j,k\notin C}$.
	\end{lemma}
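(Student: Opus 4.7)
My plan is to take the structural equations for the variables outside $C$ and subtract off the conditional expectation given $X_C$ term by term, exploiting the fact that an ancestral set can be "solved for" using only its own noise variables. The overall strategy rests on one clean observation: since $C$ is ancestral, the solution $X = (I-B)^{-1}\varepsilon$ restricted to $C$ can be written using only $(\varepsilon_k : k\in C)$. Consequently, for any $j\notin C$, the error $\varepsilon_j$ is independent of $X_C$, and therefore $\EE(\varepsilon_j\mid X_C)=0$ in the ordinary conditional-expectation sense (not merely as a best linear predictor).

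Given that, I would start from $X_j = \sum_{k\neq j}\beta_{jk}X_k + \varepsilon_j$ for $j\notin C$, subtract $\EE(X_j\mid X_C)$ from both sides, pull the (linear) conditional expectation through the sum, and rewrite the result as
\begin{equation*}
X_{j.C} = \sum_{k\neq j}\beta_{jk}\bigl(X_k - \EE(X_k\mid X_C)\bigr) + \bigl(\varepsilon_j - \EE(\varepsilon_j\mid X_C)\bigr) = \sum_{k\neq j}\beta_{jk}X_{k.C} + \varepsilon_j,
\end{equation*}
using the independence statement above for the last term. Now for $k\in C$ we have $X_{k.C} = X_k - X_k = 0$, so the sum collapses to $\sum_{k\notin C,\, k\neq j}\beta_{jk}X_{k.C} + \varepsilon_j$. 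This is precisely the structural equation associated with the submatrix $B[-C]$ and error $\varepsilon_j$.

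To finish, I would verify the remaining conditions of the model $(B[-C],\sigma^2)$: the errors $\{\varepsilon_j : j\notin C\}$ are a subset of the original independent family, so they remain mutually independent with mean zero and common variance $\sigma^2$; and $\G(B[-C])$ inherits acyclicity from $\G(B)$ since it is the induced subgraph on $V\setminus C$.

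The only real subtlety, and what I would be most careful about in writing, is the identity $\EE(\varepsilon_j \mid X_C) = 0$. Because the $\varepsilon_j$ are only assumed independent (not Gaussian), one cannot replace conditional expectation with linear projection a priori; the argument genuinely needs that $X_C$ is a measurable function of $\varepsilon_C$ alone, which in turn is precisely the use of the ancestral property of $C$. Everything else is a routine rearrangement of the linear SEM.
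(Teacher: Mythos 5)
Your proposal is correct and follows essentially the same route as the paper's proof: using the ancestral property of $C$ to conclude that $X_C$ is a function of $\varepsilon_C$ alone, hence $\EE(\varepsilon_j\mid X_C)=0$ for $j\notin C$, and then subtracting conditional expectations through the structural equation while noting $X_{k.C}=0$ for $k\in C$. The subtlety you flag about genuine conditional expectation versus linear projection is exactly the point the paper's argument relies on.
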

	\begin{proof}
		Let $j\notin C$.  Since $C$ is ancestral, $X_C$ is a function of
		$\varepsilon_C$ only and thus independent of $\varepsilon_j$.  Hence,
		$\EE(\varepsilon_j|X_C)=\EE(\varepsilon_j)=0$.  Because it also holds that $X_{k.C}=0$
		for $k\in C$, we have from~(\ref{eq:SEM}) that
		\[
		X_{j.C} = \sum_{k\in\PA(j) \setminus C}\beta_{jk}X_{k.C} + \varepsilon_j.\]
	\end{proof}
	
	The lemmas can be combined to identify a topological
	ordering of $\G(B)$ and prove Theorem~\ref{thm:peters}.
	
	\begin{proof}[of Theorem~\ref{thm:peters}]
		The claim is trivial for $p=1$ variables, which gives the base for
		an induction on $p$.  If $p>1$, then Lemma~\ref{lem:cov_known}
		identifies a source $c$ by variance minimization.  Conditioning
		on $c$ as in Lemma~\ref{lem:adjust} reduces the problem to
		size $p-1$.  By the induction assumption, $\sigma^2$ and
		$B[-\{c\}]$ can be identified.  The regression coefficients in
		the conditional expectations $\EE\left(X_j|X_c\right)$ for $j\not=c$
		identify the missing first row and column of $B$; see
		e.g.~\citet[\S7]{D}.
	\end{proof}
	
	Next, we show that alternatively one may minimize precisions to identify a
	sink node. We state analogues of Lemma~\ref{lem:cov_known} and \ref{lem:adjust} which can also be used to prove Theorem~\ref{thm:peters}. 

	\begin{lemma}\label{lem:prec_diag}
		Let $X \sim (B, \sigma^2)$ with $\G(B)$ directed and acyclic.  Let
		$\Sigma$ be the covariance matrix of $X$, and $\Phi=\Sigma^{-1}$ the
		precision matrix.   If $\CH(j)=\emptyset$, then
		$\Phi_{jj}=1/\sigma^2$.  If $\CH(j)\not=\emptyset$, then
		$\Phi_{jj}\ge \{1 + \zeta|\CH(j)|\}/\sigma^2 >1/\sigma^2$.
	\end{lemma}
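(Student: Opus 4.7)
The plan is to exploit the closed form for $\Sigma$ in~(\ref{eq:Sigma}), which inverts cleanly to give a simple expression for $\Phi$. Specifically, since $(I-B)$ is invertible (as $\G(B)$ is acyclic), we can write
\[
\Phi \;=\; \Sigma^{-1} \;=\; \frac{1}{\sigma^2}(I-B)^T(I-B).
\]
This is the key step, and everything else reduces to reading off a single diagonal entry.

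Next, I would compute $\Phi_{jj}$ directly by expanding the matrix product. Writing $(I-B)_{kj} = \delta_{kj} - \beta_{kj}$, the $(j,j)$ entry of $(I-B)^T(I-B)$ is $\sum_{k}(I-B)_{kj}^2 = 1 + \sum_{k\neq j}\beta_{kj}^2$, so
\[
\Phi_{jj} \;=\; \frac{1}{\sigma^2}\bigg(1 + \sum_{k\neq j}\beta_{kj}^2\bigg).
\]
Now recall $\beta_{kj}\neq 0$ iff $(j,k)\in E(B)$, that is, iff $k\in\CH(j)$. Hence the sum is precisely $\sum_{k\in\CH(j)}\beta_{kj}^2$.

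The two conclusions then follow immediately. If $\CH(j)=\emptyset$, the sum vanishes and $\Phi_{jj}=1/\sigma^2$. If $\CH(j)\neq\emptyset$, then each summand satisfies $\beta_{kj}^2\geq\zeta$ by the definition~(\ref{eq:betaMin}) of $\zeta$, giving
\[
\Phi_{jj} \;\geq\; \frac{1}{\sigma^2}\bigl(1 + \zeta|\CH(j)|\bigr) \;>\; \frac{1}{\sigma^2}.
\]

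There is no real obstacle here; the only thing to notice is that~(\ref{eq:Sigma}) provides an explicit factorization of $\Sigma^{-1}$, after which the result is a one-line calculation. This contrasts with the analogous Lemma~\ref{lem:cov_known}, where $\VV(X_j)=\sigma^2\sum_k \pi_{jk}^2$ involves total effects summed over paths and required picking a parent $\ell$ with $\DE(\ell)\cap\PA(j)=\{\ell\}$ to isolate a single $\beta_{j\ell}^2$ term. For the precision, no such path-counting argument is needed because $\Phi$ depends on $B$ only through $(I-B)^T(I-B)$, and the off-diagonal entries of $B$ in column $j$ are exactly indexed by $\CH(j)$.
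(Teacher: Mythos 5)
Your proof is correct and follows essentially the same route as the paper: invert the factorization in~(\ref{eq:Sigma}) to get $\Phi=\frac{1}{\sigma^2}(I-B)^T(I-B)$, read off the diagonal entry $1+\sum_{k\in\CH(j)}\beta_{kj}^2$, and bound each summand by $\zeta$. (The paper's proof writes the factors in the order $(I-B)(I-B)^T$, an apparent transposition slip; your ordering is the one consistent with~(\ref{eq:Sigma}) and yields the same diagonal formula the paper states.)
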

	\begin{proof}
		The diagonal entries of $\Phi =\frac{1}{\sigma^2} (I-B) (I-B)^T$
		are
		$
		\Phi_{jj}=
		\frac{1}{\sigma^2} (1+\sum_{k\in \CH(j)}\beta_{kj}^2).
		$
		So $\Phi_{jj} = 1/\sigma^2$ if $\CH(j) = \emptyset$, and
		$\Phi_{jj} \ge \{1+|\CH(j)|\zeta\}/\sigma^2$ if
		$\CH(j) \neq \emptyset$.
	\end{proof}
	
	Marginalization of a sink is justified by the following well-known
	fact \cite[e.g.][\S5]{D}.
	
	\begin{lemma}\label{lem:prec_known}
		Let $X \sim(B, \sigma^2)$ with $\G(B)$ directed and acyclic.  Let $C$ be
		an ancestral set in $\G(B)$.  Then
		$X_C \sim (B[C], \sigma^2)$ for submatrix $B[C]=(\beta_{jk})_{j,k\in C}$.
	\end{lemma}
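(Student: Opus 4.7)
The plan is to exploit the defining property of an ancestral set: for every $j \in C$, all of $j$'s parents already lie in $C$. Hence $\beta_{jk} = 0$ whenever $j \in C$ and $k \notin C$, which means the structural equation~(\ref{eq:SEM}) for any $X_j$ with $j \in C$ only involves variables indexed by $C$ and the error $\varepsilon_j$. So I expect the proof to be essentially a one-line bookkeeping exercise once the ancestrality of $C$ is invoked.

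Concretely, I would proceed in three short steps. First, fix $j \in C$; using $\PA(j) \subseteq \AN(j) \subseteq C$, rewrite the $j$th equation in~(\ref{eq:SEM}) as
\[
X_j \;=\; \sum_{k \in C,\, k \neq j} \beta_{jk} X_k + \varepsilon_j,
\]
which is the $j$th equation of a structural equation system on $X_C$ with coefficient matrix $B[C]$. Second, observe that the errors $(\varepsilon_j)_{j\in C}$ are a subcollection of the original mutually independent, mean-zero, variance-$\sigma^2$ errors, so they retain all those properties. Third, note that $\G(B[C])$ is the vertex-induced subgraph of $\G(B)$ on $C$, and induced subgraphs of DAGs are DAGs, which both certifies acyclicity and ensures the system has the unique solution required by the definition $X_C \sim (B[C], \sigma^2)$.

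I do not anticipate a substantive obstacle here; the only thing one must be careful about is verifying all three clauses of the notation $X_C \sim (B[C], \sigma^2)$, namely the equation form, the independence and equal variance of the errors, and the acyclicity of the induced graph. All three follow immediately from the ancestral property and from passing to a subcollection of the original independent errors, which is why the authors simply cite it as a standard fact.
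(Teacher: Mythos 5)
Your proof is correct and supplies exactly the standard argument that the paper itself omits, citing the result as a well-known fact. The key step is the one you identify: ancestrality of $C$ gives $\PA(j)\subseteq\AN(j)\subseteq C$ for each $j\in C$, so the $j$th equation of~(\ref{eq:SEM}) closes over $X_C$ with the original independent, equal-variance errors, and the induced subgraph remains acyclic.
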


	\section{Estimation Algorithms}
	\label{sec:low-dimens-algor}
	
	\subsection{Low-dimensional Problems}
	\label{sec:low-dimens-probl}
	The results from Section~\ref{sec:ident-from-vari} naturally
        yield an iterative top-down algorithm for estimation of a topological
        ordering for $\G(B)$. In each
        step of the procedure we select a source node by
        comparing variances conditional on the previously selected
        variables, so the criterion in the minimization
        in 
        Algorithm~\ref{alg:source} is the variance
	\begin{equation}\label{eq:sourceCriterion}
	f_1(\hat\Sigma, \Theta, j)
	\;=\;\hat\Sigma_{j,j}-\hat\Sigma_{j,\Theta}\hat\Sigma_{\Theta,\Theta}^{-1}\hat\Sigma_{\Theta,j}\;=\;
	\frac{1}{\{(\hat\Sigma_{\Theta\cup \{j\},\Theta\cup\{j\}})^{-1}\}_{j,j}},
	\end{equation}
	where $\hat\Sigma$ is the sample covariance matrix.
        Alternatively, and as also observed by \citet{Ghoshal2018}, a
        bottom-up procedure could construct the reverse causal
        ordering by successively minimizing precisions (or in other
        words, full conditional variances).

	\begin{algorithm}[t]
		\label{alg:source}
		\caption{Topological Ordering: General procedure with criterion $f$}
		\SetKwInOut{Input}{Input}
		\SetKwInOut{Output}{Output}
		\Input{$\hat \Sigma \in\mathbb{R}^{p\times p}$ (estimated) covariance of $X$\\}
		\Output{$\Theta$ }
		$\Theta^{(0)}\gets \emptyset$\;
		\For{$z=1,\ldots,p$}{
			$\theta \gets \arg\min_{j \in V \setminus \Theta^{(z-1)}} f(\hat \Sigma, \Theta^{(z-1)}, j)$\; \label{algLine:criteria}
			Append $\theta$ to $\Theta^{(z-1)}$ to form $\Theta^{(z)}$
		}
		\Return the ordered set $\Theta^{(p)}$.
	\end{algorithm}

	To facilitate theoretical statements about our top-down procedure, we
	 assume that the errors $\varepsilon_j$ in~(\ref{eq:SEM}) are all
	sub-Gaussian with maximal sub-Gaussian parameter $\gamma>0$.
	We indicate this by writing $X\sim (B,\sigma^2,\gamma)$.
	Our analysis is restricted to inference of a topological ordering.
	\cite{shojaie:2010} give results on lasso-based inference of the graph given an ordering.
	
	\begin{theorem}\label{thm:topDownProof}
		Let $X \sim (B, \sigma^2,\gamma)$ with $\G(B)$ directed and acyclic.
		Suppose the covariance matrix $\Sigma=\EE(XX^T)$ has minimum eigenvalue $\lambda_{\min}>0$.
		If
		\[
		n\;> \;
		p^2\left\{\log(p^2+p)-\log\left(\epsilon/2\right)\right\}128\left(1+4\frac{\gamma^2}{\sigma^2}\right)^2
		\left(\max_{j\in V}
		\Sigma_{j,j}\right)^2\left(\frac{ \zeta \lambda_{\min} +
			2\sigma^2}{\zeta\lambda_{\min}^2  }  \right)^2,
		\]
		then Algorithm~\ref{alg:source} using criterion criterion~\eqref{eq:sourceCriterion} recovers a topological ordering of $\G(B)$ with probability at least
$1-\epsilon$.
	\end{theorem}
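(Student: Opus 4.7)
The plan is induction on the step index $z$ of Algorithm~\ref{alg:source}. Let $\Theta^{(z-1)}$ denote the ordered set produced after $z-1$ iterations and let $\mathcal{A}_z$ be the event that $\Theta^{(z-1)}$ is an initial segment of some topological ordering of $\G(B)$, so that $\Theta^{(z-1)}$ is ancestral. On $\mathcal{A}_z$, Lemma~\ref{lem:adjust} yields $(X_{j.\Theta^{(z-1)}}:j\notin\Theta^{(z-1)})\sim(B[-\Theta^{(z-1)}],\sigma^2)$, and applying Lemma~\ref{lem:cov_known} to this reduced SEM together with the monotonicity $\zeta(B[-\Theta^{(z-1)}])\ge\zeta$ gives $f_1(\Sigma,\Theta^{(z-1)},j)=\sigma^2$ for every source of $\G(B[-\Theta^{(z-1)}])$ and $f_1(\Sigma,\Theta^{(z-1)},j)\ge\sigma^2(1+\zeta)$ for every non-source. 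Thus sources are separated from non-sources by a population gap of $\sigma^2\zeta$, and the induction closes provided that, with probability at least $1-\epsilon$,
\[
\max_{z\le p}\;\max_{j\notin\Theta^{(z-1)}}\big|f_1(\hat\Sigma,\Theta^{(z-1)},j)-f_1(\Sigma,\Theta^{(z-1)},j)\big|\;<\;\frac{\sigma^2\zeta}{2}.
\]

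The first ingredient I would use to bound this maximum is a sub-exponential concentration inequality for $\hat\Sigma$. Since each $X_j$ is sub-Gaussian with parameter $\gamma$, every product $X_jX_k$ is sub-exponential with parameter controlled by $\gamma^2$, so Bernstein's inequality combined with a union bound over the $p(p+1)/2$ distinct entries of the covariance gives
\[
\PP\big(\|\hat\Sigma-\Sigma\|_\infty>t\big)\;\le\;(p^2+p)\exp\!\left(-\frac{nt^2}{c_1(1+4\gamma^2/\sigma^2)^2(\max_j\Sigma_{jj})^2}\right)
\]
for a universal constant $c_1$ and $t$ in the Gaussian tail regime; this step supplies the $\log(p^2+p)$, $(1+4\gamma^2/\sigma^2)^2$, and $(\max_j\Sigma_{jj})^2$ factors in the stated bound on $n$.

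The second ingredient is a quantitative Schur-complement perturbation estimate. Writing $f_1(M,\Theta,j)=M_{jj}-M_{j\Theta}M_{\Theta\Theta}^{-1}M_{\Theta j}$, setting $\hat\Sigma=\Sigma+\Delta$ with $\|\Delta\|_\infty\le t$, and using the resolvent identity $\hat M^{-1}-M^{-1}=-\hat M^{-1}(\hat M-M)M^{-1}$ at $M=\Sigma_{\Theta\Theta}$ together with $\|\Sigma_{\Theta\Theta}^{-1}\|_{\mathrm{op}}\le 1/\lambda_{\min}$ (which every principal submatrix inherits from $\Sigma$), one obtains a bound linear in $t$ with a prefactor scaling as $p\cdot(\max_j\Sigma_{jj})\cdot(\zeta\lambda_{\min}+2\sigma^2)/(\zeta\lambda_{\min}^2)$ up to numerical constants. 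Requiring this bound to be below $\sigma^2\zeta/2$ and inverting into a condition on $t$, then feeding it into the concentration inequality, supplies the remaining $p^2$, $((\zeta\lambda_{\min}+2\sigma^2)/(\zeta\lambda_{\min}^2))^2$, and the numerical factor $128$; a final union bound over $z=1,\dots,p$ handles the outer induction.

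The main obstacle is this quantitative Schur-complement perturbation, specifically tracking constants so that the combination $(\zeta\lambda_{\min}+2\sigma^2)/(\zeta\lambda_{\min}^2)$ appears exactly as stated rather than being relaxed to a cruder $1/\lambda_{\min}^2$ bound. The $\zeta\lambda_{\min}$ in the numerator suggests separating the leading-order term, which comes from the direct perturbation of $M_{jj}$ and $M_{j\Theta}$, from the higher-order term that comes through the perturbation of $M_{\Theta\Theta}^{-1}$, and bounding each at dimension dependence $p$ rather than $p^{3/2}$. Once this perturbation estimate is in hand, the concentration inequality and the outer induction with union bound are routine.
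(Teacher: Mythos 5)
Your proposal takes essentially the same route as the paper's proof: the same induction over ancestral prefixes $\Theta^{(z-1)}$, the same population gap of $\sigma^2\zeta$ between sources and non-sources obtained from Lemmas~\ref{lem:cov_known} and~\ref{lem:adjust} (packaged in the paper as Lemma~\ref{lem:topDownHD}), the same entrywise sub-exponential concentration of $\hat\Sigma$ with a union bound over the $p(p+1)$ entries (the paper cites Lemma~1 of Ravikumar et al.), and the same deterministic error-propagation step from $\|\hat\Sigma-\Sigma\|_\infty\le\delta$ to the selection criterion. The only divergence, which resolves what you flag as your main obstacle, is that the paper propagates the error on the precision scale rather than the variance scale: it bounds $\max_{C}\|(\hat\Sigma_{C,C})^{-1}-(\Sigma_{C,C})^{-1}\|_\infty\le\eta$ uniformly over principal submatrices via Lemma~5 of Harris and Drton (Lemma~\ref{lem:inverse} here, giving $\frac{p\delta/\lambda_{\min}^2}{1-p\delta/\lambda_{\min}}$), chooses $\eta=\zeta/(2\sigma^2)$ to beat the precision gap $1/\sigma^2-1/(\sigma^2(1+\zeta))$, and then inverting $\delta$ from $\eta$ is exactly what produces the factor $(\zeta\lambda_{\min}+2\sigma^2)/(\zeta\lambda_{\min}^2)$ --- i.e., comparing the diagonal entries $\{(\hat\Sigma_{\Theta\cup\{j\},\Theta\cup\{j\}})^{-1}\}_{j,j}$ (maximizing which is equivalent to minimizing $f_1$) makes the stated constant drop out without the direct Schur-complement bookkeeping you anticipated.
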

	
	The result follows using concentration for sample covariances
	\citep[Lemma 1]{ravikumar2011} and error propagation analysis as in
	\citet[Lemma 5]{Harris2013}.  We give details in
	Appendix~\ref{sec:error-prop-proofs}, which is found in the
        supplementary materials.

	\subsection{High-dimensional Problems}
	\label{sec:high-dimens-probl}
	
	The consistency result in Theorem~\ref{thm:topDownProof} requires the
	sample size $n$ to exceed a multiple of $p^2\log(p)$ and only
	applies to low-dimensional problems.   If $p>n$, method will stop at the $n$th step when the conditional variance in~(\ref{eq:sourceCriterion}) becomes zero for
	all $j\notin \Theta$.

	However, in the high-dimensional setting if $\G(B)$ has maximum in-degree bounded
	by a small integer $q$, we may modify the criterion
	from~(\ref{eq:sourceCriterion}) to
	\begin{equation}\label{eq:sourceHD}
	f_2(\hat \Sigma, \Theta, j) \;=\; \min_{C \subseteq \Theta, |C| =
		q}f_1(\hat \Sigma,C,j) \;=\; \min_{C \subseteq \Theta, |C| = q} \hat \Sigma_{j,j} - \hat \Sigma_{j, C}(\hat \Sigma_{C, C})^{-1} \hat \Sigma_{C, j}.
	\end{equation}
	The intuition is that in the population case, adjusting by a smaller
	set $C \subseteq \Theta^{(z)}$ with $\PA(j)\subseteq C$ yields the same
	results as adjusting by all of $\Theta^{(z)}$.  The next lemma makes
	the idea rigorous.
	
	\begin{lemma}\label{lem:topDownHD}
		Let $X \sim (B, \sigma^2)$ with $\G(B)$ directed and acyclic with
		maximum in-degree at most $q$.  Let $\Sigma=\EE(XX^T)$, and suppose
		$S\subseteq V \setminus\{j\}$ is an ancestral set.  If
		$\PA(j)\subseteq S$, then $f_2(\Sigma, S, j)=\sigma^2$.  If
		$\PA(j)\not\subseteq S$, then
		$f_2(\Sigma, S, j)\ge \sigma^2(1 + \zeta)$.
	\end{lemma}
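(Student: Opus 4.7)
My plan is to interpret the criterion $f_1(\Sigma, C, j)$ as the variance of the linear-regression residual $X_j - L[X_j\mid X_C]$ and to exploit its monotonicity in $C$: enlarging the conditioning set can only decrease the residual variance. Combined with the observation that $L[\varepsilon_j\mid X_C]=0$ whenever $C\subseteq S$ with $S$ ancestral and $j\notin S$ (because then $\varepsilon_j$ and $X_C$ are independent), this reduces both halves of the lemma to the earlier Lemmas~\ref{lem:cov_known} and~\ref{lem:adjust}.

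For the first case, where $\PA(j)\subseteq S$, I will pick a witness $C^*\subseteq S$ of size $q$ containing $\PA(j)$; such a $C^*$ exists because the in-degree bound gives $|\PA(j)|\le q\le |S|$. Substituting the SEM equation for $X_j$ and using $\PA(j)\subseteq C^*$ together with $L[\varepsilon_j\mid X_{C^*}]=0$, the projection collapses to $L[X_j\mid X_{C^*}]=\sum_{k\in\PA(j)}\beta_{jk}X_k$, so the residual equals $\varepsilon_j$ and $f_1(\Sigma,C^*,j)=\sigma^2$. This already yields $f_2(\Sigma,S,j)\le\sigma^2$. For the matching lower bound, for any $C\subseteq S$ of size $q$ I will look at the augmentation $C\cup\PA(j)\subseteq S$, apply the same computation to get $f_1(\Sigma,C\cup\PA(j),j)=\sigma^2$, and then conclude from monotonicity that $f_1(\Sigma,C,j)\ge\sigma^2$.

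For the second case, where $\PA(j)\not\subseteq S$, monotonicity already gives $f_1(\Sigma,C,j)\ge f_1(\Sigma,S,j)$ for every $C\subseteq S$, so I only need the single bound $f_1(\Sigma,S,j)\ge\sigma^2(1+\zeta)$. Because conditional expectation is the mean-square optimal predictor, the linear residual is at least as large as the conditional one, giving $f_1(\Sigma,S,j)\ge\VV(X_{j.S})$. I will then feed the ancestral set $S$ into Lemma~\ref{lem:adjust} to obtain the reduced SEM $(X_{k.S})_{k\notin S}\sim(B[-S],\sigma^2)$, in which the parent set of $j$ is $\PA(j)\setminus S$, nonempty by hypothesis. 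Applying Lemma~\ref{lem:cov_known} to this reduced model produces $\VV(X_{j.S})\ge\sigma^2(1+\zeta(B[-S]))$, and the argument closes by noting that $\zeta(B[-S])\ge\zeta$ since $E(B[-S])\subseteq E(B)$.

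The main obstacle I anticipate is the mismatch between the linear-projection residual variance encoded in $f_1$ and the possibly non-linear residuals $X_{j.S}=X_j-\EE(X_j\mid X_S)$ supplied by Lemma~\ref{lem:adjust}: outside the Gaussian regime these are different objects. The inequality $f_1(\Sigma,S,j)\ge\VV(X_{j.S})$ is what bridges this gap, and once it is in place the rest is a clean reduction to the two earlier lemmas.
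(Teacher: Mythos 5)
Your proof is correct and follows essentially the same route as the paper's: reduce to the residual model of Lemma~\ref{lem:adjust} and then invoke Lemma~\ref{lem:cov_known}, noting $\zeta(B[-S])\ge\zeta(B)$. In fact your write-up is more complete than the paper's, since you explicitly supply the two steps the published proof leaves implicit — the monotonicity of the linear-projection residual variance in the conditioning set together with the witness $C^*\supseteq\PA(j)$ that connects the minimum over size-$q$ subsets to conditioning on all of $S$, and the bound $f_1(\Sigma,S,j)\ge\VV(X_{j.S})$ bridging linear projections and conditional expectations outside the Gaussian case.
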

	\begin{proof}
		The conditional variance of $X_j$ given $X_S$ is the variance of the
		residual $X_{j.S}$.  By Lemma~\ref{lem:adjust}, $X_{j.S}$ has the
		same distribution as $X_j'$ when $X'\sim(B[-S],\sigma^2)$.  Now, $j$
		is a source of $\G(B[-S])$ if and only if
		$\PA(j)\subseteq S$.  Lemma~\ref{lem:cov_known} implies that
		$\VV(X_j|X_C)=\sigma^2$ if $\PA(j)\subseteq S$ and
		$\VV(X_j|X_C)\ge \sigma^2(1 + \zeta)$ otherwise.  The claim about
		$f_2(\Sigma,S,j)$ now follows.
	\end{proof}
	
	Based on Lemma~\ref{lem:topDownHD}, we have the following result whose
	proof is analogous to that of Theorem~\ref{thm:topDownProof}.  The key
	feature of the result is a drop from $p^2$ to $(q+1)^2$ in the
        sample size requirement.
	
	\begin{theorem}
		Let $X \sim ( B, \sigma^2,\gamma)$ with $\G(B)$ directed and acyclic with
		of maximum in-degree at most $q$.  Suppose all $(q + 1)\times (q + 1)$ principal
		submatrices of $\Sigma = \EE\left(XX^T\right)$ have minimum eigenvalue at least
		$\lambda_{\min}>0$.  
		If
		\[
		n\;>\; (q+1)^2\left\{\log(p^2+p)-\log\left(\epsilon/2\right)\right\}128\left(1+4\frac{\gamma^2}{\sigma^2}\right)^2
		\left(\max_{j\in V}
		\Sigma_{j,j}\right)^2\left(\frac{ \zeta \lambda_{\min} +
			2\sigma^2}{\zeta\lambda_{\min}^2  }  \right)^2,
		\]
		then Algorithm~\ref{alg:source} using criterion~\eqref{eq:sourceHD}
		recovers a topological ordering of $\G(B)$ with probability at least
		$1-\epsilon$.
	\end{theorem}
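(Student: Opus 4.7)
The plan is to adapt the argument used for Theorem~\ref{thm:topDownProof} almost verbatim, substituting the population-level characterization of Lemma~\ref{lem:cov_known} with the high-dimensional analogue in Lemma~\ref{lem:topDownHD}. The structural reason we can replace $p^2$ by $(q+1)^2$ in the sample-size requirement is that every matrix inversion appearing inside $f_2(\hat\Sigma,\cdot,\cdot)$ now acts on a $(q+1)\times(q+1)$ principal submatrix of $\hat\Sigma$, so the error-propagation factors scale with $q+1$ rather than $p$; the only remaining appearance of $p$ is through the union bound in the concentration step.

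The argument proceeds in three steps. First, the sub-Gaussian hypothesis $X \sim (B,\sigma^2,\gamma)$ together with \citet[Lemma 1]{ravikumar2011} and a union bound over the $p^2+p$ distinct entries of $\hat\Sigma$ yield
\[
\|\hat\Sigma - \Sigma\|_\infty \;\le\; t \;:=\; \sqrt{\frac{128(1 + 4\gamma^2/\sigma^2)^2(\max_j \Sigma_{jj})^2\{\log(p^2+p) - \log(\epsilon/2)\}}{n}}
\]
with probability at least $1-\epsilon$; this is the same $t$ that appears in Theorem~\ref{thm:topDownProof}. Second, I would extend the perturbation calculation of \citet[Lemma 5]{Harris2013} to the minimization in~\eqref{eq:sourceHD}: since $f_2(\hat\Sigma,S,j)$ is a pointwise minimum of expressions $f_1(\hat\Sigma,C,j)$ each involving only a $(q+1)\times(q+1)$ principal submatrix of $\hat\Sigma$ whose minimum eigenvalue is at least $\lambda_{\min}$ by hypothesis, the standard condition-number bound yields
\[
|f_2(\hat\Sigma,S,j) - f_2(\Sigma,S,j)| \;\le\; (q+1)\, t \cdot \frac{\zeta\lambda_{\min}+2\sigma^2}{\zeta\lambda_{\min}^2}
\]
(up to absorbed absolute constants) for all admissible $(S,j)$ simultaneously on the same event.

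Third, I would induct on $z$ to show that the set $\Theta^{(z-1)}$ constructed by Algorithm~\ref{alg:source} with criterion~\eqref{eq:sourceHD} is ancestral in $\G(B)$. The base case $\Theta^{(0)}=\emptyset$ is trivial. Assuming the claim at step $z-1$, Lemma~\ref{lem:topDownHD} applied to $S = \Theta^{(z-1)}$ provides a population gap of at least $\sigma^2\zeta$ between candidates $j$ with $\PA(j) \subseteq \Theta^{(z-1)}$ and those with $\PA(j) \not\subseteq \Theta^{(z-1)}$. The stated lower bound on $n$ forces the second-step error to be strictly less than $\sigma^2\zeta/2$, so the minimizer $\theta$ chosen at line~\ref{algLine:criteria} is a source of $\G(B[-\Theta^{(z-1)}])$; consequently $\Theta^{(z)}$ remains ancestral. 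Iterating over $z=1,\dots,p$ returns a topological ordering.

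The main obstacle is carrying out the perturbation analysis uniformly across all $(q+1)$-sized principal submatrices while keeping the logarithmic factor tied to $p^2+p$ rather than to $\binom{p}{q+1}$. This is resolved because the perturbation bound in the second step depends on $\hat\Sigma$ only through $\|\hat\Sigma-\Sigma\|_\infty$, so a single union bound over individual entries suffices, and the minimum-eigenvalue hypothesis is imposed on \emph{every} $(q+1)\times(q+1)$ principal submatrix precisely so that the perturbation estimate holds uniformly. What remains after these three steps is purely bookkeeping of constants to match the explicit sample-size formula in the statement.
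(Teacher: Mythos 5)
Your proposal is correct and follows essentially the same route as the paper: the paper's own argument (given in Appendix~\ref{sec:error-prop-proofs}, where Lemma~\ref{lem:inverse} is already stated for $(q+1)\times(q+1)$ principal submatrices) combines entrywise covariance concentration with a union bound over the $O(p^2)$ entries, the perturbation bound of \citet[Lemma 5]{Harris2013} for inverses of $(q+1)$-sized principal submatrices, the population gap from Lemma~\ref{lem:topDownHD}, and an induction maintaining that $\Theta^{(z)}$ is ancestral. The only cosmetic difference is that you phrase the gap and perturbation in terms of the conditional variances $f_1,f_2$ while the paper works with the corresponding precision entries; the substance is identical.
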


	We contrast our guarantees with those for the bottom-up method
        of \citet{Ghoshal2018} which selects sinks by minimizing
        conditional precisions that are estimated using the CLIME
        estimator \citep{Cai2011}.  Because CLIME requires small
        Markov blankets, the bottom-up procedure has sample complexity
        $\mathcal{O}\left(d^8 \log(p)\right)$ where $d$ is the maximum
        total degree. This implies that the procedure cannot
        consistently discover graphs with hubs, i.e., nodes with very
        large out-degree, in the high dimensional setting.  This said,
        the computational complexity of the bottom-up procedure is
        polynomial in $d$, while our top-down procedure is exponential
        in the maximum in-degree.  In practice, we use a
        branch-and-bound procedure \citep{lumley2017leaps} to
        efficiently select the set which minimizes the conditional
        variance; see Section~\ref{sec:highDSimulations}.

	\section{Numerical Results}
	\label{sec:simulations}

	\subsection{Low-dimensional Setting}\label{sec:low-dim-sim}
	
	We first assess performance in the low-dimensional setting. Random
	DAGs with $p$ nodes and a unique topological ordering are generated
	by: (1) always including edge $v \rightarrow v+1$ for $v < p$, and (2)
	including edge $v \rightarrow u$ with probability $p_c$ for all
	$v < u - 1$. We consider a sparse setting with $p_c=3/(2p-2)$ and a
	dense setting with $p_c=0.3$.  All linear coefficients
	are drawn uniformly from $\pm[.3, 1]$.  The error terms are standard
	normal.   
	Performance is measured using Kendall's $\tau$ 
	between rankings of variables according to 
	the true and estimated topological orderings.  
	Although the true graph admits a unique ordering by construction, the graph
	estimated by the greedy search may not admit a unique ordering. 
	Nevertheless, the ranking of variables according to the estimated 
	graph is unique if we allow ties, and Kendall's $\tau$ remains a good measure
	for all the methods. 
	We also compute the percentage of
	true edges discovered (Recall), the percentage of estimated edges that
	are flipped in the true graph (Flipped), and the proportion of
	estimated edges which are either flipped or not present in the true
	graph (false discovery rate; FDR). Tables~\ref{tab:chain_dense}
	and~\ref{tab:chain_sparse} show averages over 500 random
	realizations for our top-down procedure (TD), the
	bottom-up procedure (BU) of \citet{Ghoshal2018}, and greedy
        DAG search (GDS). For the bottom up
        procedure in the low-dimensional setting, we may in fact
        simply invert the sample covariance to estimate precisions. 
	For GDS, we allow for 5 random restarts using the
	same procedure as \citet{Peters2014}.
	
	\begin{table}[t]
	\centering
	\caption{Low-dimensional dense settings}
	\label{tab:chain_dense}
	\begin{tabular}{ll|c|c|c|c|c|c|c|c|c|c|c|c|}
		\cline{3-14}
		&      & \multicolumn{3}{c|}{Kendall's $\tau$} & \multicolumn{3}{c|}{Recall \%} & \multicolumn{3}{c|}{Flipped \% } & \multicolumn{3}{c|}{FDR \%} \\ \hline
		\multicolumn{1}{|l|}{$p$}                   & $n$    & TD& BU     & GDS        & TD& BU       & GDS       & TD& BU       & GDS        & TD& BU     & GDS   \\ \hline \hline
		\multicolumn{1}{|l|}{\multirow{3}{*}{5}}  & 100 & 0.85 & 0.82 & 0.88 & 91 & 89 & 91 & 7 & 8 & 6 & 17 & 18 & 9 \\  \cline{2-14}
		\multicolumn{1}{|l|}{}                    & 500  & 0.98 & 0.97 & 0.98 & 99 & 98 & 99 & 1 & 1 & 1 & 4 & 4 & 2 \\ \cline{2-14}
		\multicolumn{1}{|l|}{}                    & 1000 & 0.99 & 0.98 & 0.99 & 99 & 99 & 99 & 1 & 1 & 1 & 3 & 3 & 1 \\     \hline \hline
		\multicolumn{1}{|l|}{\multirow{3}{*}{20}} & 100  & 0.92 & 0.85 & 0.61 & 85 & 83 & 62 & 3 & 5 & 13 & 32 & 35 & 43 \\ \cline{2-14}
		\multicolumn{1}{|l|}{}                    & 500  & 0.99 & 0.97 & 0.75 & 99 & 98 & 81 & 1 & 1 & 11 & 28 & 29 & 35 \\ \cline{2-14}
		\multicolumn{1}{|l|}{}                    & 1000 & 1.00 & 0.99 & 0.82 & 100 & 100 & 88 & 0 & 0 & 8 & 26 & 26 & 28 \\ \hline \hline
		\multicolumn{1}{|l|}{\multirow{3}{*}{40}} & 100  & 0.96 & 0.91 & 0.53 & 71 & 69 & 44 & 2 & 3 & 11 & 41 & 43 & 58 \\  \cline{2-14}
		\multicolumn{1}{|l|}{}                    & 500  & 0.99 & 0.98 & 0.59 & 96 & 96 & 63 & 0 & 1 & 14 & 41 & 42 & 57 \\  \cline{2-14}
		\multicolumn{1}{|l|}{}                    & 1000  & 1.00 & 0.99 & 0.64 & 97 & 97 & 71 & 0 & 0 & 14 & 40 & 41 & 57 \\  \hline
	\end{tabular}
\end{table}

	\begin{table}[t]
	\centering
	\caption{Low-dimensional sparse settings}
	\label{tab:chain_sparse}
	\begin{tabular}{ll|c|c|c|c|c|c|c|c|c|c|c|c|}
		\cline{3-14}
		&      & \multicolumn{3}{c|}{Kendall's $\tau$} & \multicolumn{3}{c|}{Recall \%} & \multicolumn{3}{c|}{Flipped \% } & \multicolumn{3}{c|}{FDR \%} \\ \hline
		\multicolumn{1}{|l|}{$p$}                   & $n$    & TD& BU     & GDS        & TD& BU       & GDS       & TD& BU       & GDS        & TD& BU     & GDS   \\ \hline \hline
		\multicolumn{1}{|l|}{\multirow{3}{*}{5}}  & 100 & 0.87 & 0.84 & 0.88 & 91 & 89 & 90 & 6 & 7 & 6 & 16 & 17 & 9 \\  \cline{2-14}
		\multicolumn{1}{|l|}{}                    & 500  & 0.98 & 0.96 & 0.98 & 98 & 98 & 99 & 1 & 2 & 1 & 5 & 5 & 2 \\ \cline{2-14}
		\multicolumn{1}{|l|}{}                    & 1000 & 0.99 & 0.98 & 0.99 & 99 & 99 & 99 & 1 & 1 & 1 & 3 & 4 & 1 \\     \hline \hline
		\multicolumn{1}{|l|}{\multirow{3}{*}{20}} & 100  & 0.77 & 0.59 & 0.60 & 85 & 79 & 77 & 9 & 13 & 15 & 35 & 40 & 39 \\ \cline{2-14}
		\multicolumn{1}{|l|}{}                    & 500  & 0.96 & 0.88 & 0.77 & 98 & 96 & 89 & 2 & 4 & 10 & 19 & 22 & 26 \\ \cline{2-14}
		\multicolumn{1}{|l|}{}                    & 1000 & 0.99 & 0.94 & 0.81 & 100 & 98 & 90 & 0 & 2 & 9 & 14 & 16 & 23 \\ \hline \hline
		\multicolumn{1}{|l|}{\multirow{3}{*}{40}} & 100  & 0.72 & 0.44 & 0.47 & 81 & 72 & 72 & 10 & 16 & 20 & 38 & 46 & 54 \\  \cline{2-14}
		\multicolumn{1}{|l|}{}                    & 500  & 0.96 & 0.80 & 0.58 & 98 & 94 & 81 & 2 & 5 & 18 & 24 & 31 & 47 \\  \cline{2-14}
		\multicolumn{1}{|l|}{}                    & 1000  & 0.99 & 0.91 & 0.61 & 99 & 98 & 82 & 1 & 2 & 17 & 17 & 22 & 48 \\  \hline
	\end{tabular}
\end{table}

	In both dense and sparse settings, when $p = 5$,
 	greedy search performs best in all metrics.  However, for $p = 20$ and
	$40$, the top-down approach does best, followed by
	bottom-up, and finally greedy search. The top-down and bottom-up method both have a
	substantially higher average Kendall's $\tau$ than greedy search.  
	
	In our experiments, the proposed methods are roughly 50 to 500 times faster
	than greedy search as graph size and density increases.
	On our personal computer, the average run time in the dense setting
	with $p=40$ and $n=1000$ is 8 seconds for the top-down and bottom-up
	methods, but 4,500 seconds for the greedy search.

	\subsection{High-dimensional Setting}\label{sec:highDSimulations}
	We now test the proposed procedures in a high-dimensional setting with
	$p > n$ in two scenarios. 
	Random
	DAGs with $p$ nodes and a unique topological ordering are generated
	by: (1) always including edge $v \rightarrow v+1$ for $v < p$, 
	and  either (2a) for each $v>2$, including $u_1,u_2\to v$, 
	where 
	$u_i<v$, and $u_i$ has out-degree $d_{\text{out}}(u_i)<4$, or 
	(2b) for each $v>2$, including $u_1,u_2\to v$, 
	where
	$u_i<\min(v,10) $. 
	In both scenarios, the maximum in-degree is fixed to be $q=3$.
	In the first scenario, it is also guaranteed that 
	the maximum Markov blanket size is small, bounded by $k\leq 15$.  
	In the second scenario when there exists hubs in the graph, 
	the maximum Markov blanket size grows with $p$, 
	with $k\geq 0.2p$. The errors are standard normal.

	Algorithm~\ref{alg:source} with \eqref{eq:sourceHD} as HTD
        (high-dimensional top-down) and to the bottom-up method of
        \citet{Ghoshal2018} as HBU.  The best subset search step in
        HTD is carried with subset size $q=3$; increasing $q$ beyond
        the true maximum in-degree does not change performance
        substantially.  The HBU is tuned with
        $\lambda_n = 0.5\sqrt{\log(p)/n}$.  Results for greedy search
        are not shown as computation becomes intractable when
        $p>100$. Performance is measured by Kendall's $\tau$ to
        provide direct comparison.

	\begin{table}[h]
		\centering
		\caption{High-dimensional setting with maximum in-degree $q = 3$}
		\label{tab:rand_high}
		\begin{tabular}{|c|c|c|c|c|c|c|}
			\hline
			&       &  \multicolumn{2}{c|}{Small $k$} &\multicolumn{2}{c|}{Hub graph}  \\ \hline
			$n$                    & $p$             & HTD& HBU  & HTD& HBU        \\ \hline \hline
			\multirow{5}{*}{80}  & 0.5$n$ & 0.99  & 0.89 & 1.00  & 0.70\\ \cline{2-6}
			& 0.75$n$          & 			0.98  & 0.89 & 0.99  & 0.52\\ \cline{2-6}
			& $n$              & 			0.95  & 0.87 & 0.95  & 0.39\\ \cline{2-6} 
			& 1.5$n$          & 			0.84  & 0.83 & 0.77  & 0.25\\ \cline{2-6}
			& 2$n$            & 			0.72  & 0.73 & 0.55  & 0.16 \\ \hline \hline
			\multirow{5}{*}{100} & 0.5$n$ & 1.00  & 0.93 & 1.00  & 0.70\\ \cline{2-6}
			& 0.75$n$          & 			0.99  & 0.92 & 1.00  & 0.50\\ \cline{2-6}
			& $n$               & 			0.97  & 0.87 & 0.97  & 0.38\\ \cline{2-6}
			& 1.5$n$         & 				0.86  & 0.84 & 0.74  & 0.26\\ \cline{2-6}
			& 2$n$           &  			0.73  & 0.78 & 0.63  & 0.12\\ \hline \hline
			\multirow{5}{*}{200} & 0.5$n$ & 1.00  & 0.95 & 1.00  & 0.77\\ \cline{2-6}
			& 0.75$n$           & 			1.00  & 0.90 & 1.00  & 0.61\\ \cline{2-6}
			& $n$          & 				0.99  & 0.79 & 0.99  & 0.48\\ \cline{2-6}
			& 1.5$n$            & 			0.87  & 0.74 & 0.80  & 0.20\\ \cline{2-6}
			& 2$n$            & 		0.74	  & 0.64 & 0.65  & 0.13\\ \hline
		\end{tabular}
	\end{table}

	Table~\ref{tab:rand_high} demonstrates that in the first scenario,
	both methods perform reasonably well when the considered graph has small Markov blanket. 
	The HTD procedure performs the best in  low-dimensional and moderately high-dimensional settings, and both methods have similar performance in very high-dimensional settings. 
	However, when there exists nodes with very large Markov blanket, the top-down method substantially outperforms the bottom-up method.  
	
	On our personal computer, the average run time for problems of size $p=200$
	is 10 minutes for the HTD method with $q=3$. 
	The computational complexity of HBU
	is determined by the choice of tunning parameter in the
	precisions estimation step. 
	
	Additional simulation settings are presented in
        Appendix~\ref{sec:addSim}-\ref{sec:addSim4} in the supplement including a setting with Rademacher errors as considered by \citet{Ghoshal2018}.

	\section{Discussion}
	In this note, we proposed a simple method for causal discovery
        under a linear structural equation model with equal error
        variances. The procedure consistently estimates a topological
        ordering of the underlying graph and easily extends to the
        high-dimensional setting where $p > n$.  Simulations
        demonstrate that the procedure is an attractive alternative to
        previously considered greedy search
        methods 
	in terms of both accuracy and computational effort.  The
        advantages of the proposed procedures become especially
        salient as the number of considered variables increases.

        In comparison to the related work of \cite{Ghoshal2018}, our
        approach is computationally more demanding for graphs with
        higher in-degree but requires only control over the maximum
        in-degree of the graph as opposed to the maximum degree.
        We also note that as shown in simulations in
        Appendix~\ref{sec:addSim4} a hybrid method in which greedy
        search is initialized at estimates obtained from our variance
        ordering procedures can yield further improvements in
        performance.

        Finally, we note that all discussed methods extend
	to structural equation models where the error variances are unequal, but known up to ratio.
	Indeed, if $\VV(\varepsilon_j)=a_j^2\sigma^2$ for some unknown
	$\sigma^2$ but known $a_1,\dots,a_p$, we may consider
	$\tilde X_j= X_j/a_j$ instead of the original variables.

	\section*{Acknowledgements}
	
	This work was supported by the U.S.\ National Science Foundation (Grant No.~DMS 1712535).

	\bibliographystyle{apalike}
	\bibliography{bib}

\begin{thebibliography}{}

\bibitem[Cai et~al., 2011]{Cai2011}
Cai, T., Liu, W., and Luo, X. (2011).
\newblock A constrained {$\ell_1$} minimization approach to sparse precision
  matrix estimation.
\newblock {\em J. Amer. Statist. Assoc.}, 106(494):594--607.

\bibitem[Drton, 2018]{D}
Drton, M. (2018).
\newblock Algebraic problems in structural equation modeling.
\newblock In Hibi, T., editor, {\em The 50th Anniversary of Gr\"obner Bases},
  Advanced Studies in Pure Mathematics. Mathematical Society of Japan.
\newblock arXiv:1612.05994.

\bibitem[Drton and Maathuis, 2017]{drton:2017}
Drton, M. and Maathuis, M.~H. (2017).
\newblock Structure learning in graphical modeling.
\newblock {\em Annu. Rev. Stat. Appl.}, 4:365--393.

\bibitem[Ghoshal and Honorio, 2018]{Ghoshal2018}
Ghoshal, A. and Honorio, J. (2018).
\newblock Learning linear structural equation models in polynomial time and
  sample complexity.
\newblock In {\em Proceedings of the Twenty-First International Conference on
  Artificial Intelligence and Statistics}, volume~84, pages 1466--1475. PMLR.

\bibitem[Harris and Drton, 2013]{Harris2013}
Harris, N. and Drton, M. (2013).
\newblock P{C} algorithm for nonparanormal graphical models.
\newblock {\em J. Mach. Learn. Res.}, 14:3365--3383.

\bibitem[Heinze-Deml et~al., 2018]{heinze:2018}
Heinze-Deml, C., Maathuis, M.~H., and Meinshausen, N. (2018).
\newblock Causal structure learning.
\newblock {\em Annu. Rev. Stat. Appl.}, 5:371--394.

\bibitem[Loh and B\"uhlmann, 2014]{Loh2014}
Loh, P.-L. and B\"uhlmann, P. (2014).
\newblock High-dimensional learning of linear causal networks via inverse
  covariance estimation.
\newblock {\em J. Mach. Learn. Res.}, 15:3065--3105.

\bibitem[Lumley, 2017]{lumley2017leaps}
Lumley, T. (2017).
\newblock {\em leaps: Regression Subset Selection}.
\newblock R package version 3.0.

\bibitem[Pearl, 2009]{pearl:2009}
Pearl, J. (2009).
\newblock {\em Causality}.
\newblock Cambridge University Press, Cambridge, second edition.
\newblock Models, reasoning, and inference.

\bibitem[Peters and B\"uhlmann, 2014]{Peters2014}
Peters, J. and B\"uhlmann, P. (2014).
\newblock Identifiability of {G}aussian structural equation models with equal
  error variances.
\newblock {\em Biometrika}, 101(1):219--228.

\bibitem[Peters et~al., 2017]{jonas:book}
Peters, J., Janzing, D., and Sch\"{o}lkopf, B. (2017).
\newblock {\em Elements of Causal Inference}.
\newblock MIT Press, Cambridge, MA.

\bibitem[Ravikumar et~al., 2011]{ravikumar2011}
Ravikumar, P., Wainwright, M.~J., Raskutti, G., and Yu, B. (2011).
\newblock High-dimensional covariance estimation by minimizing
  {$\ell_1$}-penalized log-determinant divergence.
\newblock {\em Electron. J. Stat.}, 5:935--980.

\bibitem[Shojaie and Michailidis, 2010]{shojaie:2010}
Shojaie, A. and Michailidis, G. (2010).
\newblock Penalized likelihood methods for estimation of sparse
  high-dimensional directed acyclic graphs.
\newblock {\em Biometrika}, 97(3):519--538.

\bibitem[Spirtes et~al., 2000]{spirtes2000causation}
Spirtes, P., Glymour, C., and Scheines, R. (2000).
\newblock {\em Causation, prediction, and search}.
\newblock Adaptive Computation and Machine Learning. MIT Press, Cambridge, MA,
  second edition.

\end{thebibliography}
	
	\newpage
	\appendix
        \thispagestyle{empty}
        \setcounter{page}{1}

        \noindent
        \begin{center}
          {\bf\large Supplementary material for On Causal Discovery \\with Equal Variance Assumption}
        \end{center}
	\section{Proof of
		Theorem~\ref{thm:topDownProof}}
	\label{sec:error-prop-proofs}
	
	We first give a lemma that addresses the estimation error for inverse
	covariances.
	
	\begin{lemma}\label{lem:inverse}
		Assume $X \sim (B, \sigma^2,\gamma)$.  Suppose all $(q + 1) \times (q + 1)$
		principal submatrices of $\Sigma = \EE(XX^T)$ have minimum
		eigenvalue at least $\lambda_{\min} > 0$.  If for  $\epsilon,\eta>0$
		we have
		\begin{equation}\label{eq:inversionError}
		n\;\geq\;
		(q+1)^2\left\{\log(p^2+p)-\log\left(\epsilon/2\right)\right\}128\left(1+4\frac{\gamma^2}{\sigma^2}\right)^2
		\left(\max_{j\in V}
		\Sigma_{j,j}\right)^2\left(\frac{ \eta \lambda_{\min} + 1}{\eta\lambda_{\min}^2  }  \right)^2.
		\end{equation}
		then
		$$\displaystyle\max_{C\subseteq V,|C|\le q+1}\;\Vert(\Sigma_{C,C})^{-1} - (\hat
		\Sigma_{C,C})^{-1}\Vert\infty \leq \eta$$ with probability at least
		$1-\epsilon$.
		%
	\end{lemma}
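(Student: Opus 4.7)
My plan is a two-stage argument: first, control the entrywise error $\|\hat\Sigma - \Sigma\|_\infty$ uniformly, and then propagate this through the matrix inverse on every submatrix of size at most $q+1$.

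\medskip

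\noindent\textbf{Stage 1: entrywise concentration.} I would invoke Lemma~1 of \citet{ravikumar2011}, which for sub-Gaussian coordinates yields
\[
\PP\!\left(\big|\hat\Sigma_{ij}-\Sigma_{ij}\big|>t\right)\;\le\;
4\exp\!\left(-\frac{n\,t^{2}}{128\,(1+4\gamma^{2}/\sigma^{2})^{2}\,(\max_{k}\Sigma_{k,k})^{2}}\right).
\]
A union bound over the $p(p+1)/2$ distinct entries of $\Sigma$ gives a tail bound of $2(p^{2}+p)\exp(\cdot)$; setting this equal to $\epsilon$ determines the threshold
\[
\delta \;=\; \sqrt{\tfrac{128(1+4\gamma^{2}/\sigma^{2})^{2}(\max_{k}\Sigma_{k,k})^{2}\{\log(p^{2}+p)-\log(\epsilon/2)\}}{n}}
\]
such that $\|\hat\Sigma-\Sigma\|_\infty\le\delta$ (entrywise max) with probability at least $1-\epsilon$. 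The key point is that the union bound is taken over entries, so the bound automatically holds on every principal submatrix simultaneously.

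\medskip

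\noindent\textbf{Stage 2: perturbation of the inverse on each submatrix.} Fix $C\subseteq V$ with $|C|\le q+1$. Since spectral norm is dominated by $|C|$ times the max entry, we get $\|\hat\Sigma_{C,C}-\Sigma_{C,C}\|_{2}\le (q+1)\delta$. Combined with the eigenvalue assumption $\lambda_{\min}(\Sigma_{C,C})\ge\lambda_{\min}$ and Weyl's inequality, this gives $\|(\hat\Sigma_{C,C})^{-1}\|_{2}\le 1/\{\lambda_{\min}-(q+1)\delta\}$ provided $(q+1)\delta<\lambda_{\min}$. Applying the identity $A^{-1}-B^{-1}=A^{-1}(B-A)B^{-1}$ and submultiplicativity,
\[
\big\|(\Sigma_{C,C})^{-1}-(\hat\Sigma_{C,C})^{-1}\big\|_\infty \;\le\;
\big\|(\Sigma_{C,C})^{-1}-(\hat\Sigma_{C,C})^{-1}\big\|_{2}
\;\le\; \frac{(q+1)\delta}{\lambda_{\min}\{\lambda_{\min}-(q+1)\delta\}}.
\]
Since this bound does not depend on $C$ beyond its size, the maximum over all $C$ with $|C|\le q+1$ satisfies the same inequality.

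\medskip

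\noindent\textbf{Stage 3: solving for $n$.} I would now require the right-hand side to be at most $\eta$. A short algebraic manipulation shows this is equivalent to $\delta\le\eta\lambda_{\min}^{2}/\{(q+1)(1+\eta\lambda_{\min})\}$. Substituting the expression for $\delta$ and solving for $n$ gives exactly the stated hypothesis~\eqref{eq:inversionError}, which in particular implies $(q+1)\delta<\lambda_{\min}$ so the Weyl bound applies.

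\medskip

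\noindent\textbf{Expected obstacles.} Nothing here is deep; the only care needed is (i) ensuring that the single high-probability event from the union bound over the $O(p^{2})$ entries is strong enough to cover all of the $\binom{p}{q+1}$ submatrices --- which it is, because the event controls $\hat\Sigma$ globally --- and (ii) tracking the $(q+1)$ factors carefully through the conversion from entrywise to spectral norm, as these determine the $(q+1)^{2}$ factor in the sample complexity. The algebraic rearrangement in Stage~3 to get the clean form $(\eta\lambda_{\min}+1)/(\eta\lambda_{\min}^{2})$ is the only nontrivial bookkeeping step.
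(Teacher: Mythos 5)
Your proposal is correct and follows essentially the same route as the paper: entrywise concentration of $\hat\Sigma$ via Lemma~1 of \citet{ravikumar2011} with a union bound over the $O(p^2)$ entries, the same threshold $\delta=\eta\lambda_{\min}^2/\{(q+1)(\eta\lambda_{\min}+1)\}$, and then propagation through the inverse of each $(q+1)\times(q+1)$ submatrix. The only difference is that in Stage~2 you re-derive by hand (Frobenius/spectral norm comparison, Weyl's inequality, and the resolvent identity) exactly the bound $(q+1)\delta/\{\lambda_{\min}(\lambda_{\min}-(q+1)\delta)\}$ that the paper imports as Lemma~5 of \citet{Harris2013}, so the two arguments coincide; the one detail you gloss over is the short verification that each standardized coordinate $X_j/\sqrt{\VV(X_j)}$ is sub-Gaussian with parameter at most $\gamma/\sigma$, which is needed to justify the constant $(1+4\gamma^2/\sigma^2)^2$.
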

	\begin{proof}
		Let
		$\delta = \frac{\eta\lambda_{\min}^2 }{(q+1)( \eta \lambda_{\min} +
			1)}$.  Because $\delta < \frac{\lambda_{\min}}{q+1}$, by Lemma 5 from
		\citet{Harris2013}, we have
		\[
		\max_{C\subseteq V,|C|\le (q+1)}\;\Vert(\Sigma_{C,C})^{-1} - (\hat
		\Sigma_{C,C})^{-1}\Vert_\infty \;\leq\;
		\frac{(q+1)\delta/\lambda_{\min}^2}{1-(q+1)\delta/\lambda_{\min}} \;= \;\eta
		\]
		provided $\|\Sigmah- \Sigma\|_\infty\le \delta$.  The proof is thus
		complete if we show that
		$\PP{\Vert\Sigmah- \Sigma\Vert_\infty> \delta} \leq \epsilon$.
		
		Note that
		$X_j = \varepsilon_j + \sum_{k \in \AN(j)}\pi_{jk}\varepsilon_k$ has
		variance $\sigma^2(1 + \sum_{k \in \AN(j)}\pi_{jk}^2)$.  Since
		$\gamma$ is a bound on the sub-Gaussian parameters of all
		$\epsilon_l$, it follows that
		$X_j / \sqrt{\VV(X_j)}$ is sub-Gaussian with parameter at most
		$\gamma/\sigma$.  Lemma 1 of \cite{ravikumar2011} applies and gives
		\[
		\mathbb{P}\{|\Sigmah_{i,j}-\Sigma_{i,j}|> \delta\}\leq 4\exp\left\{-\frac{n
			\delta^2 }{128(1+4\gamma^2/\sigma^2)^2 \max_j
			(\Sigma_{j,j})^2}\right\}\leq \frac{2}{p(p+1)}\epsilon.
		\]
		A union bound over the entries of $\Sigma$ yields that indeed
		$\PP\left(\Vert\Sigmah- \Sigma\Vert_\infty> \delta\right) \leq \epsilon$.
	\end{proof}

	\begin{proof}[of Theorem~\ref{thm:topDownProof}]
		Our assumption on $n$ is as in~(\ref{eq:inversionError}) with $\eta = \zeta/(2 \sigma^2 )$.
		Lemma~\ref{lem:inverse} thus implies that, with probability at least
		$1- \epsilon$, we have for all subsets $\Theta\subseteq V$ with $|\Theta | < q+1$ that
		\begin{equation}
		\label{eq:precision-error}
		\Vert (\hat \Sigma_{\Theta, \Theta})^{-1}- (\Sigma_{\Theta, \Theta})^{-1}\Vert_\infty\leq  \frac{\zeta}{2\sigma^2}.
		\end{equation}
		
		Let $j$ be a source in $\G(B)$, and
		let $k$ be a non-source.  Note that variance of $j$ conditional on some set $C_1$ is
		\[\sigma^2_{j\mid C_1} = \frac{1}{\left\{(\Sigma_{C_1 \cup \{j\},
			C_1 \cup \{j\}})^{-1}\right\}_{j,j}}.\]
		By Lemma~\ref{lem:topDownHD}, for any $C_1,C_2 \subseteq \Theta \subseteq V \setminus \{j,k\}$ such that $\Theta$ is an ancestral set and $\PA(j) \subseteq C_1$ 
		
		\begin{equation}
		\begin{aligned}
		\left\{(\Sigma_{C_1 \cup \{j\},
			C_1 \cup \{j\}})^{-1}\right\}_{j,j} - \left\{(\Sigma_{C_2 \cup \{k\},
			C_2 \cup \{k\}})^{-1}\right\}_{k,k} &\geq \frac{1}{\sigma^2} - \frac{1}{\sigma^2(1 + \zeta)} 
		 \geq \frac{\zeta}{\sigma^2}
		\end{aligned}
		\end{equation}
		Using~(\ref{eq:precision-error}), when $|C_1|$ and $|C_2|$ are both at most $q$, we obtain that
		\begin{equation}
		\left\{(\hat \Sigma_{C_1 \cup \{j\},
			C_1 \cup \{j\}})^{-1}\right\}_{j,j} - \left\{(\hat \Sigma_{C_2 \cup \{k\},
			C_2 \cup \{k\}})^{-1}\right\}_{k,k}
		- \frac{\zeta}{\sigma^2} \;>\;0.
		\end{equation}
		Thus
		$\hat \sigma^2_{j\mid C_1} -  \hat \sigma^2_{k\mid C_2} > 0$ which implies that  Algorithm~\ref{alg:source} correctly selects a source node at each step. On the first step, $\Theta = \emptyset$ which is trivially an ancestral set. By induction, each subsequent step then correctly adds a sink to $\Theta$ so $\Theta$ remains ancestral and a correct ordering is recovered.     
	\end{proof}

	\section{Simulations as in \citet{Peters2014}} \label{sec:addSim}
	We revisit the simulation study of \citet{Peters2014}. DAGs are generated by first creating a random topological ordering, then between any two nodes, an edge is included with probability $p_c$. We simulate a sparse setting with $p_c=3/(2p-2)$ and a dense setting with $p_c=0.3$. The linear coefficients are drawn uniformly from  $[-1, -.1]\cup[.1, 1]$ and the errors are drawn from a standard Gaussian distribution.
	Since there may not be a unique ordering for the true graph, we compute the Hamming distance between the true and estimated adjacency matrix rather than Kendall's $\tau$.

	Tables~\ref{tab:DAG_dense} and~\ref{tab:DAG_sparse} demonstrate that 
	in both settings, the greedy algorithm performs better when $p$ is small. However, when $p = 40$ the proposed algorithms infer
	the graph more accurately. In the dense setting, the proposed methods have similar FDR to greedy search, but substantially higher recall.
	In the sparse setting, the proposed methods have lower recall than greedy search, but also substantially lower FDR.

	\begin{table}[h]
		\centering
		\caption{Dense setting}
		\label{tab:DAG_dense}
		\begin{tabular}{ll|c|c|c|l|l|l|c|c|c|c|c|c|}
			\cline{3-14}
			&      & \multicolumn{3}{c|}{Hamming Dist.} & \multicolumn{3}{c|}{Recall \%} & \multicolumn{3}{c|}{Flipped \%} & \multicolumn{3}{c|}{FDR \%} \\ \hline
			\multicolumn{1}{|l|}{p}                   & n    & TD         & BU        & GDS       & TD       & BU       & GDS      & TD          & BU          & GDS        & TD      & BU      & GDS     \\ \hline \hline
			\multicolumn{1}{|l|}{\multirow{3}{*}{5}}  & 100  & 1.3        & 1.3       & 1.1       & 73       & 73     & 78     & 7        & 7        & 7       & 16      & 15      & 18      \\ \cline{2-14}
			\multicolumn{1}{|l|}{}                    & 500  & 0.7        & 0.7       & 0.5       & 80     & 80     & 88     & 4        & 4        & 5       & 8       & 7       & 9       \\ \cline{2-14}
			\multicolumn{1}{|l|}{}                    & 1000 & 0.5        & 0.5       & 0.4       & 85    & 84     & 92     & 3        & 3        & 5       & 5       & 5       & 7       \\ \hline \hline
			\multicolumn{1}{|l|}{\multirow{3}{*}{20}} & 100  & 31         & 32        & 30        & 73     & 73     & 74     & 4        & 3        & 6       & 27      & 28      & 25      \\ \cline{2-14}
			\multicolumn{1}{|l|}{}                    & 500  & 22         & 22        & 14        & 91    & 91     & 91    & 2        & 3        & 4       & 24      & 24      & 13      \\ \cline{2-14}
			\multicolumn{1}{|l|}{}                    & 1000 & 28         & 28        & 8         & 94     & 94     & 96     & 2        & 2        & 2       & 21      & 21      & 10      \\ \hline \hline
			\multicolumn{1}{|l|}{\multirow{3}{*}{40}} & 100  & 170        & 174       & 215       & 66     & 65     & 54     & 2        & 3        & 8       & 36      & 37      & 45      \\ \cline{2-14}
			\multicolumn{1}{|l|}{}                    & 500  & 152        & 155       & 186       & 93     & 93     & 76     & 2        & 2        & 9       & 38      & 39      & 42      \\ \cline{2-14}
			\multicolumn{1}{|l|}{}                    & 1000 & 136        & 137       & 168       & 96     & 95    & 83     & 1        & 1        &8       & 36      & 36      & 38      \\ \cline{1-14}
		\end{tabular}
	\end{table}
	
	\begin{table}[h]
		\centering
		\caption{Sparse setting}
		\label{tab:DAG_sparse}
		\begin{tabular}{ll|c|c|c|l|l|l|c|c|c|c|c|c|}
			\cline{3-14}
			&      & \multicolumn{3}{c|}{Hamming Dist.} & \multicolumn{3}{c|}{Recall \%} & \multicolumn{3}{c|}{Flipped \%} & \multicolumn{3}{c|}{FDR \%} \\ \hline
			\multicolumn{1}{|l|}{p}                   & n    & TD         & BU        & GDS       & TD       & BU       & GDS      & TD          & BU         & GDS        & TD      & BU      & GDS     \\ \hline \hline
			\multicolumn{1}{|l|}{\multirow{3}{*}{5}}  & 100  & 1.6        & 1.7       & 1.4       & 74     & 73     & 78     & 8        & 8       & 8       & 18      & 18      & 17      \\ \cline{2-14}
			\multicolumn{1}{|l|}{}                    & 500  & 0.8        & 0.9       & 0.6       & 85     & 84    & 91     & 3        & 4       & 5       & 7       & 7       & 9       \\ \cline{2-14}
			\multicolumn{1}{|l|}{}                    & 1000 & 0.6        & 0.6       & 0.4       & 88     & 88     & 94     & 3        & 4       & 5       & 6       & 6       & 7       \\ \hline \hline
			\multicolumn{1}{|l|}{\multirow{3}{*}{20}} & 100  & 7          & 7         & 12        & 69     & 69     & 81     & 4        & 4       & 6       & 16      & 17      & 43      \\ \cline{2-14}
			\multicolumn{1}{|l|}{}                    & 500  & 3.5        & 3.5       & 4.5       & 85     & 84     & 93     & 4        & 4       & 4       & 9       & 8       & 21      \\ \cline{2-14}
			\multicolumn{1}{|l|}{}                    & 1000 & 2.2        & 2.2       & 2.8       & 90     & 90     & 97     & 3        & 2       & 3       & 5       & 5       & 14      \\ \hline \hline
			\multicolumn{1}{|l|}{\multirow{3}{*}{40}} & 100  & 14         & 15        & 45        & 64     & 63     & 78     & 3        & 4       & 8       & 16      & 18      & 62      \\ \cline{2-14}
			\multicolumn{1}{|l|}{}                    & 500  & 7          & 7         & 16        & 84     & 84     & 94     & 3        & 3       & 3       & 8       & 7       & 33      \\ \cline{2-14}
			\multicolumn{1}{|l|}{}                    & 1000 & 5          & 5         & 10        & 90     & 89     & 97     & 3        & 3       & 3       & 6       & 6       & 24      \\ \cline{1-14}
		\end{tabular}
	\end{table}

        \section{Simulations as in \citet{Ghoshal2018}}
        \label{sec:addSim2}
We construct random graphs as in Section~\ref{sec:highDSimulations}, but we follow the data sampling procedure as used in \citet{Ghoshal2018}. All linear coefficients are
drawn uniformly from $\pm[.5, 1]$, and errors
are drown from the Rademacher distribution and scaled to have $\sigma_i^2=0.8$. 
Table~\ref{tab:rade_high} demonstrates that 
both methods performs reasonably well when Markov blankets
are restricted to be small,
and the top-down approach
performs substantially better when there are hubs. 

\begin{table}[h]
	\centering
	\caption{High-dimensional setting with Rademacher noise and maximum in-degree $q = 3$}
	\label{tab:rade_high}
	\begin{tabular}{|c|c|c|c|c|c|}
		\hline
		&       &  \multicolumn{2}{c|}{Small $k$} &\multicolumn{2}{c|}{Hub graph}  \\ \hline
		$n$                    & $p$             & HTD & HBU & HTD & HBU          \\ \hline \hline
		\multirow{5}{*}{80}  & 0.5$n$ & 0.99 & 0.95 & 0.98 & 0.73\\ \cline{2-6}
		& 0.75$n$          & 			0.98 & 0.90 & 0.89 & 0.46\\ \cline{2-6}
		& $n$              & 			0.96 & 0.90 & 0.76 & 0.36\\ \cline{2-6} 
		& 1.5$n$          & 			0.84 & 0.86 & 0.52 & 0.23\\ \cline{2-6}
		& 2$n$            & 			0.71 & 0.80 & 0.35 & 0.10 \\ \hline \hline
		\multirow{5}{*}{100} & 0.5$n$ & 0.99 & 0.97 & 0.99 & 0.69\\ \cline{2-6}
		& 0.75$n$          & 			0.99 & 0.95 & 0.92 & 0.46\\ \cline{2-6}
		& $n$               & 			0.96 & 0.93 & 0.76 & 0.34\\ \cline{2-6}
		& 1.5$n$         & 				0.84 & 0.88 & 0.52 & 0.26\\ \cline{2-6}
		& 2$n$           &  			0.72 & 0.82 & 0.39 & 0.13\\ \hline \hline
		\multirow{5}{*}{200} & 0.5$n$ & 1.00 & 0.99 & 1.00 & 0.79\\ \cline{2-6}
		& 0.75$n$           & 			1.00 & 0.98 & 0.98 & 0.59\\ \cline{2-6}
		& $n$          & 				0.98 & 0.97 & 0.86 & 0.47\\ \cline{2-6}
		& 1.5$n$            & 			0.86 & 0.84 & 0.61 & 0.20\\ \cline{2-6}
		& 2$n$             & 			0.73 & 0.77 & 0.48 & 0.10 \\ \hline
	\end{tabular}
\end{table}

\section{Simulations of fully connected graphs}
        \label{sec:addSim3}
We run simulations with fully connected graphs, as suggested by a
reviewer. The linear coefficients are drawn uniformly from
$\pm[.3, 1]$ and the errors are drawn from a standard Gaussian
distribution.  The results confirm the advantages of the proposed
methods and are shown in Table~\ref{tab:fully_connected}.  In general,
the estimated graphs from the top-down and bottom-up procedure differ
only slightly, and the values reported in the table differ in the 3rd
or 4th digit.

	\begin{table}[h]
	\centering
	\caption{Fully connected setting}
	\label{tab:fully_connected}
	\begin{tabular}{ll|c|c|c|c|c|c|c|c|c|c|c|c|}
		\cline{3-14}
		&      & \multicolumn{3}{c|}{Kendall's $\tau$} & \multicolumn{3}{c|}{Recall \%} & \multicolumn{3}{c|}{Flipped \% } & \multicolumn{3}{c|}{FDR \%} \\ \hline
		\multicolumn{1}{|l|}{$p$}                   & $n$    & TD& BU     & GDS        & TD& BU       & GDS       & TD& BU       & GDS        & TD& BU     & GDS   \\ \hline \hline
		\multicolumn{1}{|l|}{\multirow{3}{*}{5}}  & 100 & 0.92 & 0.93 & 0.83 & 91 & 92 & 80 & 4 & 3 & 7 & 4 & 4 & 9 \\  \cline{2-14}
		\multicolumn{1}{|l|}{}                    & 500  & 0.99 & 0.99 & 0.97 & 98 & 98 & 98 & 1 & 1 & 1 & 1 & 1 & 1 \\ \cline{2-14}
		\multicolumn{1}{|l|}{}                    & 1000 & 1.00 & 1.00 & 0.99 & 99 & 100 & 99 & 0 & 0 & 1 & 0 & 0 & 1 \\     \hline \hline
		\multicolumn{1}{|l|}{\multirow{3}{*}{20}} & 100  & 0.98 & 0.98 & 0.62 & 74 & 74 & 45 & 1 & 1 & 9 & 1 & 1 & 17 \\ \cline{2-14}
		\multicolumn{1}{|l|}{}                    & 500  & 1.00 & 1.00 & 0.73 & 90 & 90 & 66 & 0 & 0 & 8 & 0 & 0 & 12 \\ \cline{2-14}
		\multicolumn{1}{|l|}{}                    & 1000 & 1.00 & 1.00 & 0.81 & 92 & 92 & 76 & 0 & 0 & 7 & 0 & 0 & 8 \\ \hline \hline
		\multicolumn{1}{|l|}{\multirow{3}{*}{40}} & 100  & 0.99 & 0.99 & 0.55 & 42 & 42 & 33 & 0 & 0 & 7 & 1 & 1 & 17 \\  \cline{2-14}
		\multicolumn{1}{|l|}{}                    & 500  & 1.00 & 1.00 & 0.62 & 50 & 50 & 49 & 0 & 0 & 8 & 0 & 0 & 14 \\  \cline{2-14}
		\multicolumn{1}{|l|}{}                    & 1000  & 1.00 & 1.00 & 0.67 & 52 & 52 & 59 & 0 & 0 & 8 & 0 & 0 & 12 \\  \hline
	\end{tabular}
\end{table}

\section{As initializer for  greedy search}
        \label{sec:addSim4}
As suggested by a reviewer, we explore the performance of the greedy DAG search (GDS) 
algorithm initialized with the estimates from the proposed procedures. 
We run simulations with the same data as in Section~\ref{sec:low-dim-sim}. 
Tables~\ref{tab:combined_dense}  and~\ref{tab:combined_sparse}  show averages over 500 random 
realizations for the top-down procedure (TD), 
the greedy DAG search with random initialization (GR), and the greedy DAG search with warm initialization  (GW). 
The GR procedure is identical to the GDS procedure described in Section~\ref{sec:low-dim-sim}
and \citet{Peters2014}. 
In the GW procedure, we initialize with the output from the top-down
method,
then search through a large number of graph neighbors ($k=300$) at 
each greedy step.
Since the GW procedure is supplied with a good initializer,
we do not restart the greedy search
after it terminates, 
while 5 random restarting with $k=p,2p,3p,5p,300$ is used in GR to insure performance. 
For simplicity, we omitted the experiment with the  bottom-up procedure (BU).

Tables~\ref{tab:combined_dense} and~\ref{tab:combined_sparse} 
shows that 
in all the settings, GW performs better than the other two methods,
especially when $p$ is large. 
For reference, the average run time in the dense setting 
with $p=40$ and $n=1000$ is 8 seconds for the top-down method, 4,500 seconds for GR, and 400 seconds for GW.

	\begin{table}[t]
	\centering
	\caption{Low-dimensional dense settings}
	\label{tab:combined_dense}
	\begin{tabular}{ll|c|c|c|c|c|c|c|c|c|c|c|c|}
		\cline{3-14}
		&      & \multicolumn{3}{c|}{Kendall's $\tau$} & \multicolumn{3}{c|}{Recall \%} & \multicolumn{3}{c|}{Flipped \% } & \multicolumn{3}{c|}{FDR \%} \\ \hline
		\multicolumn{1}{|l|}{$p$}                   & $n$    & TD& GR     & GW        & TD& GR     &  GW          & TD& GR     &   GW            & TD& GR     &  GW    \\ \hline \hline
		\multicolumn{1}{|l|}{\multirow{3}{*}{5}}  & 100 & 0.85 & 0.88 & 0.88 & 91 & 91 & 91 & 7 & 6 & 6 & 17 & 9 & 10 \\  \cline{2-14}
		\multicolumn{1}{|l|}{}                    & 500  & 0.98 & 0.98 & 0.99 & 99 & 99 & 99 & 1 & 1 & 1 & 4 & 2 & 2 \\ \cline{2-14}
		\multicolumn{1}{|l|}{}                    & 1000 & 0.99 & 0.99 & 0.99 & 99 & 99 & 99 & 1 & 1 & 1 & 3 & 1 & 1 \\     \hline \hline
		\multicolumn{1}{|l|}{\multirow{3}{*}{20}} & 100  & 0.92 & 0.61 & 0.94 & 85 & 62 & 90 & 3 & 13 & 3 & 32 & 43 & 15 \\ \cline{2-14}
		\multicolumn{1}{|l|}{}                    & 500  & 0.99 & 0.75 & 0.99 & 99 & 81 & 99 & 1 & 11 & 0 & 28 & 35 & 3 \\ \cline{2-14}
		\multicolumn{1}{|l|}{}                    & 1000 & 1.00 & 0.82 & 1.00 & 100 & 88 & 100 & 0 & 8 & 0 & 26 & 28 & 2 \\ \hline \hline
		\multicolumn{1}{|l|}{\multirow{3}{*}{40}} & 100  & 0.96 & 0.53 & 0.96 & 71 & 44 & 84 & 2 & 11 & 2 & 41 & 58 & 20 \\  \cline{2-14}
		\multicolumn{1}{|l|}{}                    & 500  & 0.99 & 0.59 & 1.00 & 96 & 63 & 100 & 0 & 14 & 0 & 41 & 57 & 4 \\  \cline{2-14}
		\multicolumn{1}{|l|}{}                    & 1000  & 1.00 & 0.64 & 1.00 & 97 & 71 & 100 & 0 & 14 & 0 & 40 & 57 & 2 \\  \hline
	\end{tabular}
\end{table}

	\begin{table}[t]
	\centering
	\caption{Low-dimensional sparse settings}
	\label{tab:combined_sparse}
	\begin{tabular}{ll|c|c|c|c|c|c|c|c|c|c|c|c|}
		\cline{3-14}
		&      & \multicolumn{3}{c|}{Kendall's $\tau$} & \multicolumn{3}{c|}{Recall \%} & \multicolumn{3}{c|}{Flipped \% } & \multicolumn{3}{c|}{FDR \%} \\ \hline
		\multicolumn{1}{|l|}{$p$}                   & $n$    & TD& GR     & GW        & TD& GR     &  GW          & TD& GR     &   GW            & TD& GR     &  GW    \\ \hline \hline
		\multicolumn{1}{|l|}{\multirow{3}{*}{5}}  & 100 & 0.87 & 0.88 & 0.87 & 91 & 90 & 91 & 6 & 6 & 6 & 16 & 9 & 10 \\  \cline{2-14}
		\multicolumn{1}{|l|}{}                    & 500  & 0.98 & 0.98 & 0.98 & 98 & 99 & 99 & 1 & 1 & 1 & 5 & 2 & 2 \\ \cline{2-14}
		\multicolumn{1}{|l|}{}                    & 1000 & 0.99 & 0.99 & 0.99 & 99 & 99 & 99 & 1 & 1 & 1 & 3 & 1 & 1 \\     \hline \hline
		\multicolumn{1}{|l|}{\multirow{3}{*}{20}} & 100  & 0.77 & 0.60 & 0.82 & 85 & 77 & 90 & 9 & 15 & 7 & 35 & 39 & 25 \\ \cline{2-14}
		\multicolumn{1}{|l|}{}                    & 500  & 0.96 & 0.77 & 0.98 & 98 & 89 & 99 & 2 & 10 & 1 & 19 & 26 & 8 \\ \cline{2-14}
		\multicolumn{1}{|l|}{}                    & 1000 & 0.99 & 0.81 & 0.99 & 100 & 90 & 100 & 0 & 9 & 0 & 14 & 23 & 4 \\ \hline \hline
		\multicolumn{1}{|l|}{\multirow{3}{*}{40}} & 100  & 0.72 & 0.47 & 0.79 & 81 & 72 & 89 & 10 & 20 & 7 & 38 & 54 & 36 \\  \cline{2-14}
		\multicolumn{1}{|l|}{}                    & 500  & 0.96 & 0.58 & 0.98 & 98 & 81 & 99 & 2 & 18 & 1 & 24 & 47 & 13 \\  \cline{2-14}
		\multicolumn{1}{|l|}{}                    & 1000  & 0.99 & 0.61 & 0.99 & 99 & 82 & 100 & 1 & 17 & 0 & 17 & 48 & 8 \\  \hline
	\end{tabular}
\end{table}

\end{document}